\newcommand{\DOI}{XXX} 
\begin{document}

\title{Nonlinear time-fractional dispersive equations}

\author{Pietro Artale Harris$^1$, Roberto Garra$^1$}

\address{$^1$Dipartimento di Scienze di Base e Applicate per l'Ingegneria, \\``Sapienza'' Universit\`a di Roma\\
Via A. Scarpa 16, 00161 Rome, Italy\\
pietro.artaleharris@sbai.uniroma1.it\\
roberto.garra@sbai.uniroma1.it}

\begin{flushright}
\textit{Dedicated to Professor Francesco Mainardi\\ on the occasion of his retirement}
\vspace{0.7truecm}
\end{flushright}

\begin{abstract}

In this paper we study some cases of time-fractional nonlinear
dispersive equations (NDEs) involving Caputo derivatives, by means
of the invariant subspace method. This method allows to find exact
solutions to nonlinear time-fractional partial differential
equations by separating variables. We first consider a third order
time-fractional NDE that admits a four-dimensional invariant
subspace and we find a similarity solution. We also study a fifth
order NDE. In this last case we find a solution involving
Mittag-Leffler functions. We finally observe that the invariant
subspace method permits to find explicit solutions for a wide class
of nonlinear dispersive time-fractional equations.

\end{abstract}

\keywords{Nonlinear dispersion, Invariant subspace method, Fractional differential equations.}
\AMScode{26A33, 34A08, 35R11.}

\section{Introduction.}

In the present paper some exact solutions to time-fractional
Nonlinear Dispersive Equations (NDEs) by means of the invariant
subspace method (see e.g. \cite{Gala}) are studied. In more details,
we generalize to the fractional case some of the results presented
in \cite{Gala, Gala1}. We discuss the peculiarity of the fractional
cases, where the solutions can not be obtained as a trivial
generalization of the ordinary one. In this context, we generalize
to the time-fractional case the fifth order NDEs firstly considered
by Dey in \cite{Dey}. In this work the author considered the
following family of nonlinear partial differential equations, termed
$K(m,n,p)$
\begin{equation*}
\frac{\partial u}{\partial t}=\nu\frac{\partial^5 u^p}{\partial
x^5}+\beta\frac{\partial^3 u^n}{\partial x^3}+\gamma\frac{\partial
u^m}{\partial x}, \quad m,n,p>1.
\end{equation*}
Thus, for the fractional case, we consider a new family of equations
involving fractional derivatives in the Caputo sense, termed
$K_{\alpha}(m,n,p)$
\begin{equation}\label{K}
\frac{\partial^{\alpha} u}{\partial t^{\alpha}}=\nu\frac{\partial^5
u^p}{\partial x^5}+\beta\frac{\partial^3 u^n}{\partial
x^3}+\gamma\frac{\partial u^m}{\partial x}, \quad \alpha \in (0,1).
\end{equation}
We underline that now $m,n,p\in
\mathbb{R}^+$. \\
In more details we first consider the third order
time-fractional NDE $K_{\alpha}(0,2,0)$
\begin{equation}
\frac{\partial^\alpha u}{\partial
t^\alpha}=\frac{\partial^3}{\partial x^3}\left(\frac{u^2}{2}\right),
\end{equation}
that in the ordinary case $\alpha=1$ was deeply studied in
\cite{Gala1}. By means of the invariant subspace method, we find a
generalization of the similarity solution.\\
The second case under consideration is the fifth-order NDE equation
$K_{\alpha}(2,2,2)$
\begin{equation}
\frac{\partial^\alpha u}{\partial t^\alpha}=\nu\frac{\partial^5
u^2}{\partial x^5}+\beta\frac{\partial^3 u^2}{\partial
x^3}+\gamma\frac{\partial u^2}{\partial x},
\end{equation}
whose  ordinary case  was firstly studied in the framework of compacton solutions of NDEs (see \cite{Dey, Gala}).\\
The main aim of this paper is to show the utility of the invariant
subspace method in order to find exact solutions of time-fractional
NDEs. In particular, also in the light of the literature on
time-fractional NDEs (see e.g. \cite{Guo, Odibat, Odibat1}), this
method provides rigorous and effective tools to generalize compact
soliton solutions and similarity solutions for time-fractional
NDEs.\\
This paper is organized as follows: in Section \ref{sec:2}
we recall some preliminary definitions and results about fractional
derivatives and invariant subspace method. In Section \ref{sec:3} we
study the time-fractional third order NDE while in Section
\ref{sec:4} the time-fractional fifth order NDE is considered.
Finally we suggest further work to be done in the final Section
\ref{sec:5}.

\section{Preliminaries.}
\label{sec:2}
\subsection{Generalities about fractional calculus.}
\label{ssec:1}
Here we recall definitions and basic results about fractional
calculus, for more details we refer to \cite{Kilbas, ma, mai1, Podl}.

\bigskip

Let $\gamma$ be a positive real number. The Riemann-Liouville fractional
integral is defined by
\begin{equation}
J^{\gamma}_t f(t) =
\frac{1}{\Gamma(\gamma)}\int_0^{t}(t-\tau)^{\gamma-1}f(\tau) d\tau,
\label{riemann-l}
\end{equation}
where
$$\Gamma(\gamma)= \int_0^{+\infty}x^{\gamma-1}e^{-x}dx,$$
is the Euler Gamma function. Note that, by definition, $J^0_t f(t)= f(t)$. Moreover it satisfies the semigroup property, i.e. $J_t^{\alpha}J_t^{\beta} f(t)= J_t^{\alpha+\beta}f(t)$.\\
There are different definitions of fractional derivative (see e.g \cite{Podl}). In this paper we use the fractional derivatives in
the sense of Caputo. Hereafter we denote by $AC^n([0,t])$,
$n\in\mathbb{N}$, the class of functions $f(x)$ which are
continuously differentiable in $[0,t]$ up to order $n-1$ and with
$f^{(n-1)}(x)\in AC([0,t])$. We recall the following Theorem
(\cite[pagg. 92-93]{Kilbas})
\begin{theorem}
Let $m-1 < \gamma< m$, with $m\in\mathbb{N} $. If $f(t)\in
AC^n([0,t])$, then the Caputo fractional derivative exists almost
everywhere on $[0,t]$ and it is represented in the form
\begin{equation}
D_t^{\gamma}f(t)=   J^{m-\gamma}_t D_t^m f(t)=
\frac{1}{\Gamma(m-\gamma)}\int_0^{t}(t-\tau)^{m-\gamma-1}\frac{d^m}{dt^m}f
(\tau) \, \mathrm d\tau, \;\gamma \ne m.
\end{equation}
\end{theorem}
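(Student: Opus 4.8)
The plan is to recover the integral representation directly from the definition of the Caputo derivative together with Taylor's formula with integral remainder. Recall that, following \cite{Kilbas}, the Caputo derivative of order $\gamma$ is the Riemann--Liouville derivative $D^m_t J^{m-\gamma}_t$ applied to $f$ after subtraction of its Taylor polynomial of degree $m-1$ at the origin; since the subtracted polynomial is annihilated by this operator, the whole content of the statement is to show that what remains equals $J^{m-\gamma}_t D^m_t f$. My starting point would therefore be Taylor's formula, valid for $f\in AC^m([0,t])$,
\begin{equation*}
f(\tau)=\sum_{k=0}^{m-1}\frac{f^{(k)}(0)}{k!}\tau^k+\bigl(J^m_t D^m_t f\bigr)(\tau),
\end{equation*}
where the remainder is written as the $m$-fold Riemann--Liouville integral of $f^{(m)}$. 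The hypothesis $f\in AC^m$ ensures $f^{(m)}\in L^1([0,t])$, so this remainder is well defined.

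First I would apply the Caputo operator to both sides: the polynomial part drops out and the problem collapses to computing a Riemann--Liouville derivative of the remainder,
\begin{equation*}
D^{\gamma}_t f=D^m_t J^{m-\gamma}_t\bigl(J^m_t D^m_t f\bigr).
\end{equation*}
The core step is then purely operational. Using the semigroup property $J^{m-\gamma}_t J^m_t=J^{2m-\gamma}_t=J^m_t J^{m-\gamma}_t$ recalled above, followed by the left-inverse identity $D^m_t J^m_t=\mathrm{Id}$, the composition telescopes to
\begin{equation*}
D^{\gamma}_t f=D^m_t J^m_t\bigl(J^{m-\gamma}_t D^m_t f\bigr)=J^{m-\gamma}_t D^m_t f,
\end{equation*}
and spelling out $J^{m-\gamma}_t$ through its defining integral \eqref{riemann-l} yields exactly the claimed formula.

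The main obstacle I anticipate lies not in the formal chain of identities but in its rigorous justification under the weak regularity $f\in AC^m$. Two points need care. First, the interchange of the iterated integrals hidden in $J^{m-\gamma}_t J^m_t$ and the validity of $D^m_t J^m_t=\mathrm{Id}$ almost everywhere rest on $f^{(m)}\in L^1$ and an appeal to Fubini's theorem. Second, the a.e.\ existence asserted in the statement follows from the mapping property that $J^{\sigma}_t$ sends $L^1([0,t])$ into itself for every $\sigma>0$: applied with $\sigma=m-\gamma$ to $f^{(m)}\in L^1$, it guarantees that the weakly singular integral $\int_0^t(t-\tau)^{m-\gamma-1}f^{(m)}(\tau)\,\mathrm d\tau$ is finite outside a null set. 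Combining these two facts delivers simultaneously the existence claim and the explicit representation.
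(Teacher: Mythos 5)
The paper offers no proof of this theorem---it is recalled as a known result from \cite[pp.~92--93]{Kilbas}---and your argument is essentially the proof given in that reference: Taylor's formula with the remainder written as $J^m_t D^m_t f$, annihilation of the polynomial part by the Caputo operator, the semigroup property of the fractional integral together with $D^m_t J^m_t=\mathrm{Id}$, and a.e.\ existence from the fact that $J^{m-\gamma}_t$ maps $L^1$ into $L^1$ applied to $f^{(m)}\in L^1$. This is correct; the only point worth flagging is that the statement's hypothesis reads $f\in AC^n$ while the order satisfies $m-1<\gamma<m$, and you have (reasonably) taken $n=m$ throughout.
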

By definition the fractional derivative is a pseudodifferential
operator given by the convolution of the ordinary derivative of the
function with a power law kernel. So the reason why fractional
derivatives introduce a memory formalism becomes evident.

The following properties of fractional derivatives and integrals
(see e.g. \cite{Podl}) will be used in the analysis:
\begin{align}
&D_t^{\gamma} J_t^{\gamma} f(t)= f(t), \quad \gamma> 0,\\
&J_t^{\gamma} D_t^{\gamma} f(t)= f(t)-\sum_{k=0}^{m-1}f^{(k)}(0)\frac{t^k}{k!}, \qquad \gamma>0, \: t>0,\\
&J_t^{\gamma} t^{\delta}= \frac{\Gamma(\delta+1)}{\Gamma(\delta+\gamma+1)}t^{\delta+\gamma} \qquad \gamma>0, \: \delta>-1, \: t>0,\\
&D_t^{\gamma} t^{\delta}=
\frac{\Gamma(\delta+1)}{\Gamma(\delta-\gamma+1)}t^{\delta-\gamma}
\qquad \gamma>0, \: \delta \in (-1,0) \cup (0,+\infty), \: t>0.
\end{align}

\subsection{Invariant subspace method.}
\label{ssec:2}
The invariant subspace method, as introduced by Galaktionov \cite{Gala}, allows to solve exactly
nonlinear equations by separating variables.\\
Recently Gazizov and Kasatkin \cite{Gazizov} suggested its application to nonlinear fractional equations. \\
We recall the main idea of this method: consider a scalar evolution
equation
\begin{equation}\label{pro}
\frac{\partial u}{\partial t}= F[u],
\end{equation}
where $u=u(x,t)$ and
$F[u]\equiv F(u,\partial u/\partial x,\partial^2 u/\partial x^2,\dots,\partial^k u/\partial x^k),\;k\in\mathbb{N}$,
is a nonlinear differential operator. \\
Given $n$ linearly independent functions
$$f_1(x), f_2(x),....,f_n(x),$$
we call $W_n$, the $n$-dimensional linear space
$$W_n=\langle f_1(x), ...., f_n(x)\rangle.$$
This space is called invariant under the given operator $F[u]$, if
$F[y]\in W_n$ for any $y\in W_n$. This means that there exist $n$
functions $\Phi_1, \Phi_2,..., \Phi_n$ such that
$$F[C_1f_1(x)+......C_n f_n(x)]= \Phi_1(C_1,....,C_n)f_1(x)+......+\Phi_n(C_1,....,C_n)f_n(x),$$
where $C_1, C_2, ....., C_n$ are arbitrary constants. \\
Once the set of functions $f_i(x)$  forming the invariant subspace
is given, we search an exact solution of \eqref{pro} in the
invariant subspace in the form
\begin{equation}
u(x,t)=\sum_{i=1}^n u_i(t)f_i(x).
\end{equation}
where $f_i(x)\in W_n$. In this way, we arrive to a system of ODEs.
In many cases this is a simpler problem that allows to find exact
solutions by just separating variables \cite{Gala}.\\
A relevant question in the theory of invariant subspace method is
the following: how to find all the invariant subspaces admitted by a
given differential operator $F[u]$? For the utility of the reader,
we recall that a complete answer to this question is given by the
following Proposition (see \cite[sec. 2.1]{Gala} and
\cite{Gazizov}).
\begin{proposition}
Let $f_1(x),\dots, f_n(x)$ form the fundamental set of solutions of
a linear $n$-th order ordinary differential equation
\begin{equation}\label{prop}
L[y]= y^{(n)}+a_1(x)y^{(n-1)}+\dots+a_{n-1}(x)y'+a_n(x) y=0,
\end{equation}
and $F[y]= F(x,y,y',\dots, y^{(k)})$ a given differential operator
of order $k\leq n-1$, then the subspace $W_n= \langle f_1(x), \dots,
f_n(x)\rangle$ is invariant with respect to $F$ if and only if
\begin{equation}\nonumber
L[F[y]]=0,
\end{equation}
for all solutions $y(x)$ of \eqref{prop}.
\end{proposition}

\section{Third order nonlinear time-fractional dispersive equation.}
\label{sec:3}
In this section we consider the following third order nonlinear time-fractional dispersive equation
\begin{equation}
\frac{\partial^\alpha u}{\partial
t^\alpha}=\frac{\partial^3}{\partial x^3}\left(\frac{u^2}{2}\right),
\quad x\in\mathbb{R}, \; t\geq 0 \label{third},
\end{equation}
where $u=u(x,t)$ and $\alpha\in(0,1]$.\\
The original model equation (corresponding to $\alpha =1$) was
deeply studied in \cite{Gala1}. This equation $K_{\alpha}(0,2,0)$
belongs to the more general class of NDEs \eqref{K}. Here we find an
exact solution of \eqref{third} corresponding to a time-fractional
generalization of the similarity solution discussed in \cite{Gala1}.
First of all, we recall the following useful
\begin{lemma}
For $\alpha\in(0,1/2)\cup(1/2,1)$ equation \eqref{third} admits
\begin{equation}
W_4=\left\lbrace1,x,x^2,x^3\right\rbrace
\end{equation}
 as invariant subspace.\\
\label{lemmainvariantthird}
\end{lemma}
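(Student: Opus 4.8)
The plan is to deduce the claim from the spatial characterisation of invariant subspaces in the Proposition, after which only an elementary degree count remains. Denote the right-hand side of \eqref{third} by $F[u]=\frac{1}{2}\,\partial_x^{3}(u^{2})$, and note that $W_4=\langle 1,x,x^2,x^3\rangle$ is exactly the fundamental solution set of the linear equation $L[y]=y^{(4)}=0$. By the Proposition, $W_4$ is invariant under $F$ if and only if $L[F[y]]=\bigl(F[y]\bigr)^{(4)}=0$ for every $y\in W_4$, so this is the single identity I would verify.

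For a generic $y=\sum_{i=0}^{3}c_i x^{i}$ the square $y^{2}$ is a polynomial of degree at most $6$, and applying $\frac{1}{2}\,\partial_x^{3}$ lowers its degree by three. Hence $F[y]$ has degree at most $3$, so $F[y]\in W_4$ and, equivalently, $L[F[y]]=\frac{1}{2}(y^{2})^{(7)}=0$ because a degree-six polynomial has vanishing seventh derivative. Carrying out the expansion of $(y^{2})^{(3)}$ once more yields the explicit coordinates $\Phi_1,\dots,\Phi_4$ of $F$ on $W_4$; I would record them since they are precisely the nonlinearities appearing in the reduced system used afterwards.

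It remains to comment on the hypothesis $\alpha\in(0,1/2)\cup(1/2,1)$. The invariance just proved is a property of the spatial operator $F$ alone and is valid for every $\alpha\in(0,1]$; the value $\alpha=1/2$ enters only when $W_4$ is used to separate variables. Inserting $u(x,t)=\sum_{i=0}^{3}a_i(t)\,x^{i}$ into \eqref{third} and collecting the coefficients of $1,x,x^2,x^3$ gives the fractional system $D_t^{\alpha}a_i=\Phi_i(a_0,\dots,a_3)$; the self-similar ansatz $a_3(t)\propto t^{-\alpha}$ for the leading amplitude, matched through $D_t^{\alpha}t^{\delta}=\frac{\Gamma(\delta+1)}{\Gamma(\delta-\alpha+1)}t^{\delta-\alpha}$, forces an amplitude of the form $b_3=\frac{\Gamma(1-\alpha)}{60\,\Gamma(1-2\alpha)}$. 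The factor $\Gamma(1-2\alpha)$ in the denominator has a pole at $\alpha=1/2$ (equivalently $D_t^{1/2}t^{-1/2}=0$), so the similarity solution degenerates there; this pole is the sole reason for excluding $\alpha=1/2$.

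I expect no genuine obstacle: the degree count is immediate, and the only points needing care are the bookkeeping of the coefficients of $(y^{2})^{(3)}$ and the observation that the exceptional value $\alpha=1/2$ arises from the $\Gamma(1-2\alpha)$ pole rather than from any breakdown of the invariance itself.
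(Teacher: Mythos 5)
Your proof is correct. The paper's own proof is a bare direct computation: it substitutes a generic cubic $g(x)=C_0+C_1x+C_2x^2+C_3x^3$ into $F[u]=\frac{1}{2}\partial_x^3(u^2)$ and writes out $F[g]=6(C_1C_2+C_0C_3)+12(C_2^2+2C_1C_3)x+60C_2C_3x^2+60C_3^2x^3$, which visibly lies in $W_4$. You reach the same conclusion by a cleaner route: identifying $W_4$ as the kernel of $L[y]=y^{(4)}$ and checking $L[F[y]]=\frac{1}{2}(y^2)^{(7)}=0$ by a degree count ($\deg y^2\le 6$, so three derivatives leave degree $\le 3$). Your version is more economical and makes transparent \emph{why} invariance holds, whereas the paper's explicit expansion is not wasted effort: the coefficients $\Phi_i$ it produces are exactly the right-hand sides of the reduced system \eqref{systemthird}, which you correctly note would still have to be computed afterwards. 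Your closing observation --- that the invariance itself holds for all $\alpha$, the restriction to $\alpha\in(0,1/2)\cup(1/2,1)$ entering only through the pole of $\Gamma(1-2\alpha)$ in the similarity amplitudes --- matches the paper's deferral of that point (``will be clear in the following'') and its subsequent Remark, and is a worthwhile clarification of why the hypothesis on $\alpha$ appears in the statement at all.
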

\begin{proof}
By direct calculation, for any
\begin{equation}
g(x)= C_0+C_1 x+C_2 x^2+C_3 x^3\in W_4,
\end{equation}
we have that, being $F[u]= \frac{\partial^3}{\partial
x^3}\left(\frac{u^2}{2}\right)$
\begin{equation}\nonumber
F[g]= 6(C_1C_2+C_0C_3)+12(C_2^2+2C_1C_3)x+60C_2C_3x^2+60 C_3^2
x^3\in W_4,
\end{equation}
as claimed.
\end{proof}
The reason why we must exclude $\alpha =1/2$ and $\alpha=1$ will be clear in the following.\\
Thanks to Lemma \ref{lemmainvariantthird}, it is possible to express
the solution of equation \eqref{third} in the form:
\begin{equation}
u(x,t)=g_0(t)+g_1(t)x+g_2(t)x^2+g_3(t)x^3.
\end{equation}
This leads us to the following system
\begin{equation}
\begin{cases}
d^\alpha g_0/d t^\alpha=6(g_1g_2+g_0g_3),\\
d^\alpha g_1/d t^\alpha=12(g_2^2+2g_1g_3),\\
d^\alpha g_2/d t^\alpha=60g_2g_3,\\
d^\alpha g_3/d t^\alpha=60g_3^2.
\end{cases}
\label{systemthird}
\end{equation}
The solution $u(x,t)$ will be found by solving one by one each equation of system \eqref{systemthird}.\\
The last of equations \eqref{systemthird} is solved by
\begin{equation}
g_3(t)=C_3t^\beta,
\label{ansatz}
\end{equation}
where $C_3$ is a real constant and the exponent $\beta$ can be found
by direct substitution in the equation.\\ It is
\begin{equation}
\frac{d^\alpha}{d t^\alpha}C_3t^\beta=C_3\frac{\Gamma(\beta+1)}{\Gamma(\beta-\alpha+1)}t^{\beta-\alpha}=60C_3^2t^{2\beta},\qquad \beta>-1
\end{equation}
then, \eqref{ansatz} is an effective solution if
\[
\beta=-\alpha,\quad
C_3=\frac{\Gamma(1-\alpha)}{60\Gamma(1-2\alpha)}.
\]
Thus,
\begin{equation}
g_3(t)=\frac{\Gamma(1-\alpha)}{60\Gamma(1-2\alpha)}t^{-\alpha}
\end{equation}
In the same way it is possible to find the solutions of the other equations in \eqref{systemthird}.\\
We obtain that all the solution are in the form
\begin{equation}
g_i(t)=C_it^{-\alpha},\quad C_i\in\mathbb{R},\;i=0,\dots,3.
\end{equation}
Below are reported the solutions $g_i$:
\begin{equation}
\begin{cases}
g_0(t)=C_0t^{-\alpha}\\
g_1(t)=C_1t^{-\alpha}\\
g_2(t)=C_2t^{-\alpha}\\
g_3(t)=C_3 t^{-\alpha},
\end{cases}
\end{equation}
where
\begin{equation}
\begin{cases}\nonumber
C_0=\frac{400}{3}\left(\frac{\Gamma(1-2\alpha)}{60\Gamma(1-\alpha)}\right)^2,\\
C_1= 20\frac{\Gamma(1-2\alpha)}{\Gamma(1-\alpha)}\\
C_2=1\\
C_3= \frac{\Gamma(1-\alpha)}{60\Gamma(1-2\alpha)}.
\end{cases}
\end{equation}
The complete solution $u(x,t)$ results, then,
\begin{equation}\label{soll}
 u(x,t)=\frac{C_0}{t^{\alpha}}+C_1 \frac{x}{t^{\alpha}}+\frac{x^2}{t^{\alpha}}+
 C_3\frac{x^3}{t^{\alpha}}.
\end{equation}
In order to understand the meaning of the above solution, we recall
that the first results about shock and rarefaction waves for NDE
\eqref{third} for $\alpha = 1$ have been discussed in \cite{Gala1}
by analogy with the well known theory for first-order conservation
laws (see e.g. \cite{Evans}). The generalization of their analysis
to the time-fractional case is not completely trivial but suggests
the interpretation of the found solution \eqref{soll} in the
framework of the theory of global similarity solutions of
third-order NDEs.
\begin{remark}
It is important to note that this solution is valid for
$\alpha\in(0,1/2)\cup(1/2,1)$. Indeed for $\alpha =1/2$ and $\alpha
=1$, coefficients appearing in \eqref{soll} can be not defined, due
to the singularity in the Gamma coefficients. The reason why the
found solution is not valid for $\alpha = 1$ is simply given by the
definition of Caputo derivatives. Indeed for an integer $\alpha$ the
Caputo derivative coincides with an ordinary derivative and we must
consider directly the ordinary equation
\begin{equation}\label{thirdog}
\frac{\partial u}{\partial t}=\frac{\partial^3}{\partial
x^3}\left(\frac{u^2}{2}\right), \quad x\in\mathbb{R}, \; t\geq 0,
\end{equation}
deeply studied by Galaktionov and Pohozaev in \cite{Gala1}, where
blowing-up and global similarity solutions have
been considered.\\
 A further remark regards the critical case
$\alpha = 1/2$. From direct calculations we have shown that in this
case the found solution \eqref{soll} is divergent. On the other hand
this critical value of $\alpha$ plays a significant role to
discriminate different regimes. Indeed the sign of the found
solutions is positive for $0<\alpha <1/2$ and negative for
$1/2<\alpha<1$. This non-trivial result should be object of further
research about similarity solutions of time-fractional NDEs.
\end{remark}

\section{Fifth order nonlinear time-fractional dispersive equation.}
\label{sec:4}
In this section we study the following fifth order nonlinear
time-fractional dispersive equation
\begin{equation}
\frac{\partial^\alpha u}{\partial t^\alpha}=\nu\frac{\partial^5
u^2}{\partial x^5}+ \beta\frac{\partial^3 u^2}{\partial
x^3}+\gamma\frac{\partial u^2}{\partial x} \label{quintic},\qquad t\geq 0,\,x\in\mathbb{R},\,\alpha\in(0,1].
\end{equation}
The ordinary case $\alpha=1$ was studied by Dey in \cite{Dey},
where compacton solutions of NDEs were considered.\\
In order to find an exact solution of equation \eqref{quintic}, by the invariant subspace method, we make use of the following Lemma
(see \cite[p. 165]{Gala}).
\begin{lemma}
Equation \eqref{quintic} admits $W_3=\left\lbrace1,\cos x,\sin
x\right\rbrace$ as invariant subspace if and only if
\begin{equation}
16\nu -4\beta+\gamma=0.
\end{equation}
\label{lemmainvariant}
\end{lemma}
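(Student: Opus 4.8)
The plan is to prove the equivalence by the same direct-computation strategy used in Lemma \ref{lemmainvariantthird}: take a generic element of $W_3$, apply the spatial operator $F[u]=\nu\,\partial_x^5 u^2+\beta\,\partial_x^3 u^2+\gamma\,\partial_x u^2$, and determine exactly when the image remains in $W_3$. First I would write
\begin{equation}\nonumber
g(x)=C_0+C_1\cos x+C_2\sin x\in W_3,
\end{equation}
with arbitrary constants $C_0,C_1,C_2$, and compute the square $g^2$. The decisive algebraic step is to linearize the trigonometric products: the terms $\cos^2 x$, $\sin^2 x$ and $\cos x\sin x$ generate \emph{second-harmonic} contributions $\cos 2x$ and $\sin 2x$, in addition to a constant and first-harmonic terms in $\cos x,\sin x$. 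This frequency doubling is the crux of the matter, because $\cos 2x$ and $\sin 2x$ do \emph{not} belong to $W_3$.

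Next I would apply the odd-order derivatives appearing in $F$. Since each $\partial_x^{2j+1}$ sends $\cos kx$ and $\sin kx$ to trigonometric functions of the same frequency $k$ (up to sign and a factor $k^{2j+1}$), the operator $F$ maps the first harmonic back into $\langle\cos x,\sin x\rangle\subset W_3$ and the second harmonic into $\langle\cos 2x,\sin 2x\rangle$. Consequently the first-harmonic part never obstructs invariance, and the only possible violation comes from the second-harmonic part. Collecting the relevant coefficients, the $k=2$ derivatives produce the common factor
\begin{equation}\nonumber
2^5\nu-2^3\beta+2\gamma=2\,(16\nu-4\beta+\gamma),
\end{equation}
so I expect the coefficients of $\cos 2x$ and $\sin 2x$ in $F[g]$ to be scalar multiples of $16\nu-4\beta+\gamma$, weighted respectively by $C_1C_2$ and $C_1^2-C_2^2$.

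The main obstacle is purely the bookkeeping: one must track the $\cos 2x$ and $\sin 2x$ coefficients through the expansion of $g^2$ and the three differentiations, and verify that both vanish precisely when $16\nu-4\beta+\gamma=0$. Requiring $F[g]\in W_3$ for \emph{every} choice of $C_1,C_2$ forces the common factor to vanish, since $C_1C_2$ and $C_1^2-C_2^2$ do not vanish identically; this gives the necessity direction. Conversely, when $16\nu-4\beta+\gamma=0$ the offending second-harmonic terms disappear, leaving only constant and first-harmonic contributions, so $F[g]\in W_3$; this gives sufficiency and completes the equivalence. As an alternative route one could invoke the Proposition, noting that $\{1,\cos x,\sin x\}$ is the fundamental system of $L[y]=y'''+y'=0$ and checking $L[F[y]]=0$ on its solutions, but the explicit computation is more transparent and mirrors Lemma \ref{lemmainvariantthird}.
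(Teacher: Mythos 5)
Your proposal is correct and follows exactly the route the paper indicates: the paper omits the details and simply states that the proof uses the same direct computation as Lemma \ref{lemmainvariantthird}, namely applying $F$ to a generic $g=C_0+C_1\cos x+C_2\sin x$ and isolating the second-harmonic obstruction, whose coefficient $2(16\nu-4\beta+\gamma)$ you identify correctly. No discrepancy to report.
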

The proof of the above Lemma is based on the same arguments used in
Lemma \ref{lemmainvariantthird}. Under the assumption of lemma
\ref{lemmainvariant}, it is possible to express the solution of
equation \eqref{quintic} in the following form:
\begin{equation}
u(x,t)=g_1(t)+g_2(t)\cos x+g_3(t)\sin x,\quad t\geq 0.
\end{equation}
This leads us to following system
\begin{equation}
\begin{cases}
d^\alpha g_1/d t^\alpha=0,\\
d^\alpha g_2/d t^\alpha=\mu g_1g_3,\\
d^\alpha g_3/d t^\alpha=-\mu g_1g_2,
\end{cases}
\label{system1}
\end{equation}
where $\mu=2(\nu-\beta+\gamma)\neq 0$ is a real constant. We assume the condition $\mu\neq 0$ in order to get a non trivial solution of system \eqref{system1}.\\
Now, from the first of equations \eqref{system1}, we get
\begin{equation}
g_1(t)=\mbox{const} =:C
\end{equation}
Then, defining $\bar{\mu}=C\mu$ and by applying the
time-fractional derivative $d^\alpha/dt^\alpha$ to the second of
equations \eqref{system1}, we get the new system
\begin{equation}
\begin{cases}
g_1=C\\
d^\alpha / dt^\alpha d^\alpha g_2/d t^\alpha=-\bar{\mu}^2 g_2\\
d^\alpha g_3/d t^\alpha=-\bar{\mu} g_2.
\end{cases}
\label{system2}
\end{equation}
The main effort in order to solve equation \eqref{system2} is the
solution of the third equation. To this aim we will study the
following Cauchy problem
\begin{equation}
\begin{cases}
\frac{d^\alpha}{d t^\alpha}\frac{d^\alpha}{d t^\alpha}f(t)=-\bar{\mu}^2f(t), \\
f(0)=1\\
\frac{d^\alpha f}{dt^\alpha}(t)\vert_{t=0}=0.
\end{cases}
\label{cauchy}
\end{equation}
The choice of these special initial conditions will be clear in the following.\\
The solution of equation \eqref{cauchy} can be obtained by using the Laplace transform method. To this purpose we
here recall that for $\alpha \in(0,1)$, (see e.g. \cite{Kilbas})
\begin{equation}
\mathcal{L}\left\lbrace\frac{d^\alpha}{d t^\alpha}f(t)\right\rbrace=s^\alpha\tilde{f}(s)-s^{\alpha-1}f(0),
\label{laplace}
\end{equation}
where
\begin{equation}
\mathcal{L}\left\lbrace f(t)\right\rbrace=\tilde{f}(s)=\int_0^\infty
e^{-st}f(t)dt.
\end{equation}
Thus, by setting $h(t):=d^\alpha  f(t)/d t^\alpha$, by formula \eqref{laplace}, we obtain
\begin{equation}
\mathcal{L}\left\lbrace\frac{d^\alpha}{d t^\alpha}\frac{d^\alpha}{d t^\alpha}f(t)\right\rbrace=
s^{2\alpha}\tilde{f}(s)-s^{2\alpha-1},
\end{equation}
where we used both initial conditions of the Cauchy problem \eqref{cauchy}.
We now make the Laplace transform in both terms of the first of equations \eqref{cauchy}, we get
\begin{equation}
s^{2\alpha}\tilde{f}(s)=-\bar{\mu}^2\tilde{f}+s^{2\alpha-1}
\end{equation}
thus
\[
\tilde{f}=\frac{s^{2\alpha-1}}{s^{2\alpha}+\bar{\mu}^2}
\]
whose inverse Laplace transform is given by
\begin{equation}
f(t)=E_{2\alpha ,1}(-\bar{\mu}^2 t^{2\alpha}),\qquad \alpha\in(0,1],
\end{equation}
where $E_{2\alpha ,1}(\cdot)$ is the Mittag-Leffler function
\begin{equation}
E_{2\alpha
,1}(-\bar{\mu}t^{2\alpha})=\sum_{k=0}^\infty\frac{(-1)^k\bar{\mu}^{2k}t^{2\alpha
k}}{\Gamma(2\alpha k+1)}.
\end{equation}
We remind that Mittag-Leffler functions play a fundamental role in the theory of fractional differential equations, see for example \cite{mai,mathaibook,mathaiarticle}.\\
Going back to the system of equations \eqref{system2}, we have that
\begin{equation}
g_2(t)=E_{2\alpha ,1}(-\bar{\mu}^2t^{2\alpha}),
\end{equation}
and, by  substitution in the second equation of \eqref{system2}, we
have
\begin{equation}
g_3(t)=-J^{\alpha}_t\bar{\mu}g_2(t)=-\bar{\mu}t^{\alpha}E_{2\alpha,\alpha+1}\left(-\bar{\mu}^2t^{2\alpha}\right),
\end{equation}
up to an integration additive constant, that we set equal to zero for sake of simplicity.
By the above calculations we are allowed to
write the solution $u(x,t)$ of equation \eqref{quintic} in the form
\begin{equation}\label{sol}
u(x,t)=C+E_{2\alpha ,1}(-\bar{\mu}^2t^{2\alpha})\cos x-
\bar{\mu}t^{\alpha}E_{2\alpha,\alpha+1}\left(-\bar{\mu}^2t^{2\alpha}\right)\sin
x.
\end{equation}
We observe that, for $\alpha =1 $ we retrieve the solution in
\cite[p. 167]{Gala}. Indeed, for $\alpha =1$
\begin{align}
\nonumber & E_{2
,1}(-\bar{\mu}^2t^{2})=\sum_{k=0}^\infty\frac{(-1)^k\bar{\mu}^{2k}t^{2
k}}{2k!}= \cos(\bar{\mu}t),\\
\nonumber & \bar{\mu}t
E_{2,2}\left(-\bar{\mu}^2t^{2}\right)=\sum_{k=0}^\infty\frac{(-1)^k\bar{\mu}^{2k+1}t^{2
k+1}}{(2k+1)!}= \sin(\bar{\mu}t),
\end{align}
so that the solution \eqref{sol} becomes
\begin{equation}
u(x,t)=C+\cos(\bar{\mu}t)\cos x- \sin(\bar{\mu}t)\sin x= C+
\cos(x+\bar{\mu}t),
\end{equation}
that can be written as a compacton (see \cite{Gala}).
\begin{figure}[h!]
\centering
\includegraphics[scale=0.8]{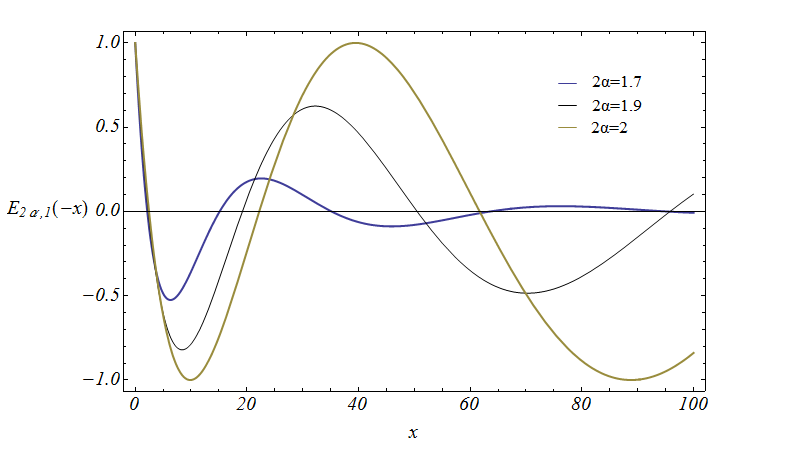}
\caption{Mittag-Leffler function, appearing in \eqref{sol} for
different values of $2\alpha$.} \label{fig:1}
\end{figure}
\newpage
We can now discuss the physical meaning of the introduction of
memory effects by means of time-fractional derivatives in the
fifth order NDE \eqref{quintic}. As it can be seen in Figure \ref{fig:1}, fractional
derivatives introduce damping effects on the time-evolution
depending on the real parameter $\alpha$. We recall that the
Mittag-Leffler function of parameter $\alpha \in (1,2)$ is related
to the so called \textit{fractional oscillation} (see e.g.
\cite{mai,mmai}), that is a damped oscillation where the
damping effect depends on the order of fractionality.

\begin{remark}
We now explain the reason of the choice of the initial conditions in
the Cauchy problem \eqref{cauchy}. It is known that, in general,
\[\frac{d^\alpha}{d t^\alpha}\frac{d^\alpha f}{d t^\alpha}\neq \frac{d^{2\alpha}f}{d t^{2\alpha}}\]
so that we can not write $d^{2\alpha}g_2/d t^{2\alpha}$
instead of $d^\alpha/d t^\alpha d^\alpha g_2/d
t^\alpha$ in the second equation of \eqref{system2}. This
substitution would have been very convenient  to find the
solution without considering an initial condition on the
time-fractional derivative  without a clear physical meaning (a
recent discussion about this point can be find in \cite{hey}). In
light of this fact, in order to find the explicit solution of
\eqref{cauchy} by using the Laplace transform method, we must give
the initial condition on the time-fractional derivative, that we
take null for simplicity.
\end{remark}

\begin{remark}
In the case  $\nu=0,\,\beta=\gamma=1$ in \eqref{quintic} the
time-fractional Rosenau-Hyman equation $K_{\alpha}(2,2)$ is
obtained:
\[
\frac{\partial^\alpha u}{\partial t^\alpha}=\frac{\partial^3
u^2}{\partial x^3}+ \frac{\partial u^2}{\partial x}.
\]
The Rosenau-Hyman equation (see \cite{Rosen}) plays a relevant role
in the theory of solitary waves with compact support. A detailed
study about the fractional Rosenau-Hyman equation traveling wave
solutions should be object of further studies.
\end{remark}

\section{Conclusions and remarks.}
\label{sec:5}
In recent papers time-fractional NDEs have been studied by different
authors with semi-analytical methods. In some cases they find exact
solutions that can be recovered by the invariant subspace method.
For example Odibat in \cite{Odibat} has considered variants of the KdV equation
involving Caputo time-fractional derivatives, such as
\begin{equation}\label{odino}
\frac{\partial^{\alpha}u}{\partial t^{\alpha}}+a \frac{\partial u^2}{\partial x}+\frac{\partial}{\partial x}\left[u\frac{\partial^2 u}{\partial x^2}\right]=0,
\quad t,a>0, \alpha \in (0,1].
\end{equation}
The author finds an explicit solution to \eqref{odino} by means of the homotopy perturbation method, that is
\begin{equation}\label{odi}
u(x,t)= \begin{cases}
\frac{c}{a}\sin^2(\sqrt{a}x, \sqrt{a}ct^{\alpha}, \alpha), \quad |x-ct^{\alpha}|<\frac{\pi}{\mu},\\
0\quad \mbox{otherwise},
\end{cases}
\end{equation}
where $\mu= \frac{\sqrt{a}}{2}$,
\begin{equation}
\sin^2(x,t,\alpha)=1-\cos x\cos(t,\alpha)-\sin x\sin(t, \alpha),
\end{equation}
and
\begin{align}
\nonumber & \cos(t,\alpha)=\sum_{k=0}^{\infty}\frac{(-1)^k t^{2k}}{\Gamma(2k\alpha+1)}=E_{2\alpha,1}\left(-t^2\right)\\
\nonumber & \sin(t,\alpha)=\sum_{k=0}^{\infty}\frac{(-1)^k t^{2k+1}}{\Gamma(2k\alpha+\alpha+1)}=t E_{2\alpha,\alpha+1}\left(-t^2\right).
\end{align}
The solution \eqref{odi} can be found in a direct way by using the
invariant subspace method. Indeed this solution generalizes the
compact wave solution, but it is in separating variable form. This
means that \eqref{odino} admits as invariant subspace
\begin{equation}
W_3=\left\lbrace1,\cos x,\sin x\right\rbrace.
\end{equation}
In the literature about time-fractional NDEs these exact solutions with separating variables are found also in other recent works, even if
with different methods.
In \cite{Odibat1} the author has considered time-fractional $K(n,m)$ equations. Also in this case, some exact results
can be recovered by the invariant subspace method. It can be proved that the fractional equations discussed in \cite{Odibat1} admit as invariant subspace
$W_3=\left\lbrace 1,\cosh \frac{x}{2}, \sinh \frac{x}{2}\right\rbrace$. \\
We can conclude that, considering the literature on time-fractional
NDEs, the invariant subspace method can provide effective and
rigorous tools to find exact solutions to a wide class of nonlinear
fractional equations, avoiding the use of perturbative or
approximate methods. Moreover, from the physical point of view, it
allows to find relevant compacton-like solutions and rarefaction
wave solutions to time-fractional NDEs. The meaning of these
generalized equations is explained in Section \ref{sec:4}, where we
have shown that the role of fractionality is to introduce damping
effects in the evolution of compacton solutions of NDEs.

\baselineskip=0.9\normalbaselineskip

\bibliographystyle{CAIMbibstyle}

\bibliography{biblio}

\begin{thebibliography}{10}

\bibitem{Gala}
V.~Galaktionov and S.~Svirshchevskii, {\em Exact solutions and invariant
  subspaces of nonlinear partial differential equations in mechanics and
  physics}.
\newblock Chapman and Hall/CRC applied mathematics and nonlinear science
  series, 2007.

\bibitem{Gala1}
V.~Galaktionov and S.~Pohozaev, Third-order nonlinear dispersive equations:
  shocks, rarefaction and blowup waves, {\em Computational Mathematics and
  Mathematical Physics}, vol.~48, no.~10, pp.~1784--1810, 2008.

\bibitem{Dey}
B.~Dey, Compacton solutions for a class of two parameter generalized odd-order
  {K}orteweg-–de{V}ries equations, {\em Physical Review E}, vol.~57,
  pp.~4733--4738, 1998.

\bibitem{Guo}
S.~Guo, L.~Mei, Y.~Fang, and Z.~Qiu, Compacton and solitary pattern solutions
  for nonlinear dispersive {K}d{V}-type equations involving {J}umarie's
  fractional derivative, {\em Physics Letters A}, vol.~376, pp.~158--164, 2012.

\bibitem{Odibat}
Z.~Odibat, Compact structures in a class of nonlinearly dispersive equations
  with time-fractional derivatives, {\em Applied Mathematics and Computation},
  vol.~205, pp.~273--280, 2008.

\bibitem{Odibat1}
Z.~Odibat, Solitary solutions for the nonlinear dispersive $k(m,n)$ equations
  with fractional time derivatives, {\em Physics Letters A}, vol.~370,
  pp.~295--301, 2007.

\bibitem{Kilbas}
A.~Kilbas, H.~Srivastava, and J.~Trujillo, {\em Theory and Applications of
  Fractional Differential Equations}.
\newblock North-Holland Mathematics Studies, Elsevier, 2006.

\bibitem{ma}
R.~Gorenflo and F.~Mainardi, Fractional calculus: integral and differential
  equations of fractional order, in {\em Fractals and Fractional Calculus in
  Continuum Mechanics} (A.~Carpinteri and F.~Mainardi, eds.), pp.~223--276,
  Springer Verlag, 1997.

\bibitem{mai1}
F.~Mainardi, {\em Fractional Calculus and Waves in Linear Viscoelasticity}.
\newblock Imperial College Press, London, 2010.

\bibitem{Podl}
I.~Podlubny, {\em Fractional Differential Equations}.
\newblock Academic Press, New York, 1999.

\bibitem{Gazizov}
R.~Gazizov and A.~Kasatkin, Construction of exact solutions for fractional
  order differential equations by the invariant subspace method, {\em Computers
  and Mathematics with Applications}, vol.~66, no.~5, pp.~576--584, 2013.

\bibitem{Evans}
L.~Evans, {\em Partial differential equations}.
\newblock American Mathematical Society, 2010.

\bibitem{mai}
F.~Mainardi and R.~Gorenflo, On {M}ittag--{L}effler-type functions in
  fractional evolution processes, {\em Journal of Computational and Applied
  Mathematics}, vol.~118, no.~12, pp.~283--299, 2008.

\bibitem{mathaibook}
A.M.Mathai, R.K.Saxena, and H.J.Haubold, {\em The {H}-function: Theory and
  Applications}.
\newblock Springer, New York, 2010.

\bibitem{mathaiarticle}
H.~Haubold, A.M.Mathai, and R.K.Saxena, {M}ittag--{L}effler functions and their
  applications, {\em Journal of Applied Mathematics}, vol.~2011, pp.~1--51,
  2011.

\bibitem{mmai}
F.~Mainardi and R.~Gorenflo, Fractional relaxation-oscillation and fractional
  diffusion-wave phenomena, {\em Chaos, Solitons and Fractals}, vol.~7, no.~9,
  pp.~1461--1477, 1996.

\bibitem{hey}
N.~Heymans and I.~Podlubny, Physical interpretation of initial conditions for
  fractional differential equations with {R}iemann-–{L}iouville fractional
  derivatives, {\em Rheologica Acta}, vol.~45, pp.~158--164, 2006.

\bibitem{Rosen}
P.~Rosenau and J.~Hyman, Compactons: solitons with finite wavelength, {\em
  Physical Review Letters}, vol.~70, pp.~564--567, 1993.

\end{thebibliography}

\end{document}